\newtheorem{thm}{Theorem}
\newtheorem{dfn}[]{Definition}
\newtheorem{lemma}[]{Lemma}
\newtheorem{proposition}[]{Proposition}
\begin{document}
\title{On the energy barrier of hypergraph product codes}

\author{Guangqi Zhao}
\affiliation{Centre for Engineered Quantum Systems, School of Physics,
University of Sydney, Sydney, NSW 2006, Australia}
\email{gzha4233@uni.sydney.edu.au}

\author{Andrew C. Doherty}
\affiliation{Centre for Engineered Quantum Systems, School of Physics,
University of Sydney, Sydney, NSW 2006, Australia}

\author{Isaac H. Kim}
\affiliation{Department of Computer Science, University of California, Davis, CA 95616, USA}

\begin{abstract}
A macroscopic energy barrier is a necessary condition for self-correcting quantum memory. 
In this paper, we prove tight bounds on the energy barrier applicable to any quantum code obtained from the hypergraph product of two classical codes. If the underlying classical codes are low-density parity-check codes (LDPC), the energy barrier of the quantum code is shown to be the minimum energy barrier of the underlying classical codes (and their transposes) up to an additive $O(1)$ constant.
\end{abstract}

\maketitle

{\color{blue}\emph{Introduction}.}- Quantum computers are capable of solving problems that are likely intractable for classical computers, such as factoring of large integers~\cite{Shor1994} and simulation of quantum systems~\cite{Feynman1982,Lloyd1996}. However, realistic quantum computers are noisy. In order to build a useful quantum computer capable of carrying out such computational tasks, quantum error correction is likely necessary.

Since the discovery of the first quantum error-correcting code~\cite{shor1996fault}, much progress has been made towards finding better codes. While the leading approach to scalable quantum error correction has been the surface code~\cite{kitaev2003fault} for nearly 20 years, lately there has been a surge of interest in using quantum low-density parity check (LDPC) codes. This recent interest is in part due to Gottesman, who showed that with quantum LDPC codes, one can achieve a constant overhead for fault-tolerant quantum computation~\cite{gottesman2014faulttolerant}. A well-known approach to construct such codes is the hypergraph product construction~\cite{2014-Tillich}. More recent studies led to the development of other families of quantum LDPC codes with improved parameters~\cite{freedman2002z2,evra2022decodable,kaufman2021new,hastings2021fiber,panteleev2021quantum,Breuckmann_2021,Panteleev.2021,leverrier2022quantum,dinur2022good}. Moreover, recent studies suggest that there are viable fault-tolerant quantum computing architectures that can host such codes  ~\cite{Tremblay2022,bravyi2024highthreshold,Xu2023}.

One challenge in using these codes lies in the decoding. While there are general methods such as the BP+OSD decoder~\cite{Panteleev2021degeneratequantum}, it is a priori not obvious how well such a general-purpose decoder would work for a given code. One attractive approach is to use codes that have an extensive energy barrier. For such codes, a simple process that iteratively lowers the energy (quantified in terms of the number of stabilizers violated) is a viable decoder candidate. Alternatively, such a model can be viewed as a candidate for self-correcting quantum memory, protecting quantum information by the macroscopic energy barrier. (We note that the energy barrier is not a sufficient condition for building a self-correcting quantum memory~\cite{Haah2011,Michnicki2014,Siva2017}, though it is nevertheless a necessary condition.) 

Such approaches work well for four-dimensional (4D) toric code~\cite{alicki2008} and the quantum expander code~\cite{leverrier2015quantum,fawzi2020constant}, both of which have extensive energy barriers. Unfortunately, rigorous bounds on the energy barrier are hard to come by and often derived for specific codes (or family of codes)~\cite{Bravyi.2011,Michnicki2014,leverrier2015quantum,Williamson2023,Lin2023}. 

In this paper, we prove a tight bound on the energy barrier for the hypergraph product codes~\cite{2014-Tillich}, defined in terms of the energy barrier of the underlying classical codes. The hypergraph product is defined in terms of the parity check matrices (or equivalently, the Tanner graphs) of two classical codes. While there are codes with better parameters~\cite{hastings2021fiber,Breuckmann_2021,panteleev2021quantum,Panteleev.2021}, the hypergraph product remains a flexible framework for constructing quantum LDPC codes with many advantages, such as the variety of decoders~\cite{leverrier2015quantum,fawzi2020constant,Panteleev2021degeneratequantum,Roffe2020,Grospellier2021}, logical gates~\cite{Krishna2021,Cohen2022,Quintavalle2023}, and distance-preserving syndrome extraction circuit~\cite{Tremblay2022,manes2023distance}.

Now we describe our main result. Without loss of generality, consider a hypergraph product of two classical codes, defined in terms of the parity check matrices $H_1$ and $H_2$. We denote the parity check matrix of the resulting quantum code as $H_{(H_1, H_2)}$. For both the quantum and the classical code, we denote the energy barrier as $\Delta(H)$, where $H$ can be a parity check matrix of the quantum or the classical code. We prove that under a modest condition,
\begin{equation}
    \Delta(H_{(H_1, H_2)}) = \min (\Delta(H_1), \Delta(H_2), \Delta(H_1^T), \Delta(H_2^T)), \label{eq:main_result}
\end{equation}
where $H^T$ is the transpose of $H$. Eq.~\eqref{eq:main_result} holds if the energy barriers of the classical codes are larger than or equal to a certain sparsity parameter of the code. Importantly, if the underlying codes are LDPC, then the sparsity parameter is $O(1)$. Therefore, Eq.~\eqref{eq:main_result} holds if the energy barriers of $H_1, H_2, H_1^T,$ and $H_2^T$ grow as the code size grows. If this condition is not satisfied, the energy barrier is bounded by a constant, a fact that follows trivially from the definition of the hypergraph product code.

The proof of Eq.~\eqref{eq:main_result} is based on two results. First, if two logical operators of a quantum LDPC code are equivalent up to a stabilizer, their energy barriers differ only by a constant related to the code's sparsity parameters $w_c, w_q$ [Theorem~\ref{thm:energy_barrier_bound_stabilizer}]. Thanks to this result, proving the energy barrier for a given code reduces to proving the energy barrier for any complete set of logical operators, which can be a much smaller set than the set of all logical operators. We then identify a set of logical operators for which the exact energy barrier can be determined in terms of the energy barriers of the underlying classical codes. Together, these two results imply Eq.~\eqref{eq:main_result}.

{\color{blue}\emph{Energy barrier of codes}.}- 
We first present a formal definition of the energy barrier for stabilizer codes~\cite{Bravyi2009nogo}. Let $\mathcal{C}$ be the code subspace of a stabilizer code, defined in terms of the stabilizer group $S$. The code subspace can be viewed as the ground state subspace of a Hamiltonian of the form $\hat{H} = \sum_{i=1}^{m}(I-s_i)/2$, where $\{s_1, \ldots, s_m\} \subset S$ is a set of generators. Note that the energy barrier depends on the choice of the generating set. For an operator $P$ in the Pauli group $\mathcal{P}$, the energy of the state $P|\psi\rangle$ is given by $\bra{\psi}P^\dag \hat{H} P\ket{\psi} = \epsilon(P)$. Here $\ket{\psi}$ is any ground state with $\bra{\psi}\hat{H}\ket{\psi} = 0$, and $\epsilon(P)$ is the energy cost of $P$. This is the number of $s_i$s that anticommute with $P$, i.e., $\epsilon(P) = | \{i: s_iP = -Ps_i\}|$. Equivalently, one may define the energy barrier in terms of the parity check matrix $H$ of the stabilizer code. Let $v(P)$ be the binary (bit-string) representation of a Pauli $P$:
\begin{equation}
    \epsilon(P) = \text{wt}(Hv(P)).
\end{equation}

A sequence $P_0, P_1, \ldots, P_t$ from the Pauli group $\mathcal{P}$ forms a \emph{path} from $P_0$ to $P_t$ if for each index $i$, the operators $P_i$ and $P_{i+1}$ differ at no more than one qubit. The notation $w(P_0, P_t)$ represents the collection of all such paths from $P_0$ to $P_t$. For $ r \in w(P_0, P_t)$, $\epsilon_{\max }(r)$ denotes the highest energy along path $r$, i.e., $\epsilon_{\max}(r) = \max_{P_i \in r}\epsilon(P_i)$, as the energy barrier of $E$ along the path $r$.

The minimum energy associated with a Pauli $P$ is the smallest value of $\epsilon_{\text{max}}$ across all possible paths from $0$ to $P$. This is the energy barrier of $P$, denoted as $\Delta(P)$:
\begin{eqnarray}
  \Delta(P) = \min_{r\in w(I,P)}\epsilon_{\max}(r).
\end{eqnarray}

The energy barrier of the quantum code is the minimum energy barrier over the set of nontrivial logical operators.
\begin{dfn}
   Let $S$ be a stabilizer group and $L(S)$ be the set of nontrivial logical operators. The energy barrier is  
   \begin{eqnarray}
      \Delta(H) := \min_{\ell \in L(S)}\Delta(\ell).
   \end{eqnarray}
   \label{dfn:quantum_energy_barrier}
\end{dfn}
This is the smallest energy the environment has to overcome to enact a logical operation on the encoded qubit. We can similarly define the energy barrier of classical codes by only considering the path formed by Pauli-$X$s. We shall denote this energy barrier also as $\Delta(H)$, where $H$ in this case is the parity check matrix of the classical code.

{\color{blue}\emph{Quantum LDPC codes and their energy barriers}.}- A quantum LDPC code is a stabilizer code with a sparse parity-check matrix; see \cite{zunaira2015,Breuckmann2021ldpc} for recent reviews. The sparsity parameters are $w_c$ and $w_q$, which are the maximum row and column weights of the parity check matrix, respectively. These represent the maximum weight amongst all the checks and the maximum number of checks associated with a single bit. A code is LDPC if $w_c, w_q=O(1)$.

Here, we prove a property that holds true for any quantum LDPC code. It states that the energy barrier of two logical operators equivalent under stabilizers are equal, provided that at least one of their energy barriers is larger than or equal to $w_cw_q$. For quantum LDPC codes, $w_cw_q = O(1)$. Therefore, if the energy barrier is $\Omega(1)$ for any given non-trivial logical operator, it must also be the same energy barrier for any equivalent logical operator.

We first prove the following bound.
\begin{lemma}
\label{lemma:energy_barrier_stabilizer_only}
    Let $\mathcal{C}$ be a quantum code with sparsity parameters $(w_c, w_q)$. For any stabilizer $s \in S$, 
    \begin{equation}
        \Delta(s) \leq w_cw_q.
    \end{equation}
\end{lemma}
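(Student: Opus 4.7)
The strategy is to construct an explicit path from $I$ to $s$ whose maximum energy is bounded by $w_c w_q$. Write $s$ as a product of stabilizer generators $s = g_{i_1} g_{i_2} \cdots g_{i_\ell}$. The path is built from $\ell$ consecutive segments: during the $j$-th segment the current operator equals $g_{i_1} \cdots g_{i_{j-1}}$, and we multiply it by $g_{i_j}$ by applying the constituent single-qubit Paulis of $g_{i_j}$ one qubit at a time in an arbitrary order. The endpoints of every segment are stabilizers and hence have energy zero, so the only thing to bound is the energy of the intermediate Paulis within each segment.

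The core observation is that every intermediate Pauli has the form $P = s' \cdot q$, where $s' = g_{i_1} \cdots g_{i_{j-1}} \in S$ is the stabilizer accumulated so far and $q$ is a partial product of the single-qubit Paulis that comprise $g_{i_j}$. For any stabilizer generator $s_k$, commutation with $s'$ is automatic, so a short calculation ($s_k P = s' s_k q$ while $P s_k = \pm s' s_k q$ depending on the commutation of $s_k$ with $q$) shows that $s_k$ anticommutes with $P$ if and only if it anticommutes with $q$. Hence $\epsilon(P) = \epsilon(q)$. Now $q$ is supported on a subset of $\text{supp}(g_{i_j})$, which contains at most $w_c$ qubits, and each qubit participates in at most $w_q$ stabilizer generators. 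Consequently at most $w_c w_q$ generators can anticommute with $q$, giving $\epsilon(P) \leq w_c w_q$ uniformly along the path.

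The only conceptual step is the identity $\epsilon(P) = \epsilon(q)$: once the accumulated stabilizer $s'$ is recognized as acting invisibly on the anticommutation structure, the bound is an immediate consequence of the LDPC sparsity conditions $w_c, w_q$. I do not anticipate a serious obstacle; the care required is mostly notational, ensuring that the qubit-at-a-time decomposition within each segment is well-defined and that the bound holds uniformly in $j$. Taking the maximum over the entire path then yields $\Delta(s) \leq w_c w_q$.
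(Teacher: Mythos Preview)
Your proposal is correct and follows essentially the same approach as the paper: decompose $s$ into generators, apply each generator one qubit at a time, and use the fact that the energy resets to zero at the end of each segment while being bounded by $w_c w_q$ within each segment. Your explicit observation that $\epsilon(s'q)=\epsilon(q)$ is a slightly more careful justification of the same step the paper takes for granted.
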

\begin{proof}
Without loss of generality, any stabilizer $s$ can be expressed as a product $s = s_1 \cdots s_m$, where $s_1, \ldots, s_m$ are the stabilizer generators. Consider a path that applies $s_i$ in sequence, from $i=1$ to $m$. For each $s_i$, we envision applying a (sub)sequence of Paulis in the support of $s_i$. This subsequence has a length of at most $w_c$, and each Pauli in the sequence affects at most $w_q$ stabilizers. Therefore, the highest energy attained within this subsequence is at most $w_cw_q$. Once $s_i$ is applied, the energy cost becomes zero. Therefore, $\Delta(s) \leq w_c w_q$.

\end{proof}

\begin{thm}
\label{thm:energy_barrier_bound_stabilizer}
Let $\mathcal{C}$ be a $(w_c, w_q)$ quantum LDPC code. For any logical operator $L$ and stabilizer $s \in S$,
\begin{equation}
    \Delta(Ls) \leq  \max(\Delta(L), w_cw_q).
\end{equation}
\end{thm}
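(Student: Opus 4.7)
The plan is to construct an explicit path from the identity to $Ls$ by concatenating two pieces: an optimal energy-minimising path from $I$ to $L$, followed by a shifted version of a low-energy path from $I$ to $s$.

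First, by the definition of the energy barrier, I can choose a path $r_L = (I, P_1, \ldots, P_t = L) \in w(I, L)$ with $\epsilon_{\max}(r_L) = \Delta(L)$. Second, by Lemma~\ref{lemma:energy_barrier_stabilizer_only}, there exists a path $r_s = (I, Q_1, \ldots, Q_{t'} = s) \in w(I, s)$ with $\epsilon_{\max}(r_s) \le w_c w_q$. I then form the concatenated sequence
\begin{equation}
    I, P_1, \ldots, P_t = L,\ L Q_1, L Q_2, \ldots, L Q_{t'} = Ls.
\end{equation}
Consecutive elements in the first part differ on at most one qubit by construction, and consecutive elements $L Q_i, L Q_{i+1}$ in the second part differ by $Q_i^{-1} Q_{i+1}$, which acts on at most one qubit since $Q_i, Q_{i+1}$ do. Thus this is a valid path in $w(I, Ls)$.

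The key observation to bound the energy along the second half is that since $L$ is a logical operator, it commutes with every stabilizer generator $s_i$. Consequently $s_i$ anticommutes with $L Q_i$ if and only if $s_i$ anticommutes with $Q_i$, so $\epsilon(L Q_i) = \epsilon(Q_i)$ for every $i$. Therefore the energies along the second half of the concatenated path are identical to those along $r_s$, which are all bounded by $w_c w_q$. Combined with the bound $\Delta(L)$ on the first half, the maximum energy along the whole path is at most $\max(\Delta(L), w_c w_q)$, and the theorem follows by the definition of $\Delta(Ls)$ as a minimum over all paths.

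I do not anticipate a serious obstacle: the argument is essentially a concatenation plus the commutation identity $\epsilon(LP) = \epsilon(P)$ for $L$ logical. The only mild subtlety is verifying that the path-concatenation respects the single-qubit-difference rule at the junction $L \to L Q_1$, which is immediate because $Q_1$ itself differs from $I$ on at most one qubit.
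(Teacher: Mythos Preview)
Your proposal is correct and follows essentially the same concatenation strategy as the paper: build a path to $L$ and then append a (shifted) low-energy path from $I$ to $s$. If anything, your write-up is more explicit than the paper's, since you spell out the identity $\epsilon(LQ_i)=\epsilon(Q_i)$ via commutation of $L$ with the stabilizer generators, which the paper uses implicitly when asserting $\epsilon_{\max}(r')=\max(\epsilon_{\max}(r),\Delta(s))$.
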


\begin{proof}
Without loss of generality, consider a path $r = (\ell_1, \ldots, \ell_N) \in w(0, L)$. We can consider a new path $r' \in w(0, Ls)$ by appending $r$ with a sequence $\delta_r \in w(0, s)$, forming $(\ell_1, \ldots, \ell_N, \ell_1', \ldots, \ell_M')$ where $\delta_r = (\ell_1', \ldots, \ell_M')$. By assumption,
\begin{equation}
    \epsilon_{\max}(r') = \max(\epsilon_{\max}(r), \Delta(s)).
\end{equation}
 According to Lemma~\ref{lemma:energy_barrier_stabilizer_only}, $\Delta(s) \leq w_cw_q$. It implies that for any given path $r$, there exists a path $r^{\prime}$ such that $\epsilon_{\max}(r') \leq \max(\epsilon_{\max}(r), w_cw_q)$. Choose $r$ such that $\Delta(L) = \epsilon_{\max}(r)$. Since $\Delta(Ls) \leq \epsilon_{\max}(r')$, we have $\Delta(Ls) \leq \max(\Delta(L), w_cw_q)$. 
\end{proof}
We remark that using the same logic, one can deduce $\Delta(L) = \Delta(Lss) \leq \max(\Delta(Ls), w_cw_q)$. Consequently, if $\Delta(L)\geq w_cw_q$ or $\Delta(Ls)\geq w_cw_q,$ $\Delta(L) = \Delta(Ls)$. Therefore, to determine the asymptotic scaling of the energy barrier of a quantum LDPC code, it suffices to consider the energy barrier of any \emph{fixed} complete set of logical operators. Once an energy barrier is obtained for such a set, the energy of all the other logical operators is also essentially determined, thanks to Theorem~\ref{thm:energy_barrier_bound_stabilizer}.

However, it is a priori not obvious how to choose such a set. Below, we will solve this problem for a large family of quantum codes known as the hypergraph product code~\cite{2014-Tillich}.

{\color{blue}\emph{Hypergraph product code and its logical operators}.}- 
Hypergraph product codes are CSS codes formed from two classical linear codes. Without loss of generality, let $H_1$ and $H_2$ be $r_1\times n_1$ and $r_2\times n_2$ parity check matrices, respectively. The parity-check matrix of the hypergraph product code becomes~\cite{2014-Tillich}:
\begin{equation}
\begin{aligned}
H_X &=& \left(H_1 \otimes \mathbf{I}_{n_2} \ \mathbf{I}_{r_1} \otimes H_2^T\right),  \\
H_Z &=& \left(\mathbf{I}_{n_1} \otimes H_2 \ H_1^T \otimes \mathbf{I}_{r_2}\right).
\end{aligned}
\label{eq:parity_check_matrix}
\end{equation}
Because $H_XH_Z^T = 0$, these two parity check matrices define a CSS code, with a quantum parity check matrix
\begin{equation}
    H_{(H_1,H_2)} = \left(\begin{array}{cc}
        H_X & 0 \\
         0  & H_Z
    \end{array}\right).
\end{equation}
The classical codes defined by $H_1, H_2, H_1^T, \text{ and } H_2^T$ form the parent classical codes. Without loss of generality, we will assume that $H_i$ and $H_i^T$ define codes with parameters $[n_i, k_i, d_i]$ and $[r_i, k_i^T, d_i^T]$, for $i=1, 2$. Under this assumption, the quantum code is a $\left[[n_1n_2+r_1r_2, k_1k_2 + k_1^Tk_2^T, \min(d_1,d_2, d_1^T, d_2^T)]\right]$ code~\cite{2014-Tillich}.

We now introduce a particularly useful set of logical operators, which we refer to as the \emph{canonical} logical operators~\cite{2022-Quintavalle,manes2023distance}. For the $Z$-type logical operators, consider the following operator
\begin{equation}\label{eq:logical_canonical}
    \begin{pmatrix}
        \sum_{k, j} \lambda_{kj} \overline{x}_k \otimes y_j \\
        \sum_{\ell, m} \kappa_{\ell m} a_{\ell} \otimes \overline{b}_m
    \end{pmatrix},
\end{equation}
where (i) $H_1\overline{x}_i= H_2^T\overline{b}_m=0$ and (ii) $y_j \not\in \text{Im}(H_2^T)$ and $a_{\ell} \not\in \text{Im}(H_1)$ are unit vectors. We note that $j\in \{1,\ldots, k_2 \}$ and $\ell \in \{1, \ldots, k_1^T \}$ and that the set of logical operators expressible in this form is complete~\cite{2022-Quintavalle,manes2023distance}, which means all $k_1k_2 + k_1^Tk_2^T$ logical operators are provided. A canonical logical operator is \emph{elementary} if only one of the coefficients, either $\lambda_{kj}$ or $\kappa_{\ell m}$, equals one. A similar form of canonical $X$-type logical operators can also be constructed. Because our discussion below can be applied to such operators with little change, we omit the discussion about the $X$-type logical operators.

Let $\mathcal{G}_1(V_1, C_1)$ and $\mathcal{G}_2(V_2, C_2)$ be the Tanner graphs of codes defined by $H_1$ and $H_2$, respectively. Here, $V_1$ and $V_2$ represent the set of bits, and $C_1$ and $C_2$ denote the set of checks. We use $\{v_1^i: i \in \{1,2,\cdots n_1\}\}$ (resp. $\{v_2^j: j \in \{1,2,\cdots n_2\}\}$) to refer to bit vertices in $V_1$ (resp. $V_2$), and also as length $n_1$ (resp. $n_2$) unit vectors with the $i$th (resp. $j$th) entry as $1$. The hypergraph product $\mathcal{G}_1 \times \mathcal{G}_2$ is a bipartite graph with vertex set $V \cup C$, where $V = V_1 \otimes V_2 \cup C_1 \otimes C_2$ is the qubit set and $C = C_1 \otimes V_2 \cup V_1 \otimes C_2$ is the stabilizer set.

The set of qubits can be partitioned into two subsets, $V_1 \otimes V_2$ and $C_1 \times C_2$. For $H_X$, $H_1 \otimes \mathbf{I}_{n_2}$ acts on $V_1 \otimes V_2$ and $\mathbf{I}_{r_1} \otimes H_2^T$ acts on $C_1 \otimes C_2$. Moreover, the subset $V_1 \otimes V_2$ can be further partitioned into $n_2$ subsets $\{V_1 \otimes v_2^1, V_1 \otimes v_2^2, \cdots, V_1 \otimes v_2^{n_2} \}$, where $V_1\otimes v_2^k := \{v \otimes v_2^k: v\in V_1 \}$ and $V_2 = \{v_2^1,\ldots, v_2^{n_2} \}$ [Fig.~\ref{fig:hp-qsubset-partition}].

Thus a $Z$-type Pauli operator can be expressed as a bit-string $z=\left(z^{(1)}, z^{(2)}\right)^T$, where $z^{(1)}$ is supported on the qubit subset $V_1 \otimes V_2$ with vector space $\mathbb{F}_2^{n_1} \otimes \mathbb{F}_2^{n_2}$, and $z^{(2)}$ is supported on the qubit set $C_1 \otimes C_2$ with vector space $\mathbb{F}_2^{r_1} \otimes \mathbb{F}_2^{r_2}$. Because $z^{(1)}$ and $z^{(2)}$ act on a tensor product of two vector spaces, one can view them also as a matrix. (For instance, $v_i\otimes u_j$ for unit vectors $v_i$ and $u_j$ can be viewed as a matrix whose entry is $1$ on the $i$'th row and the $j$'th column and zero elsewhere.) We call this procedure as vector reshaping, and explain a few basic facts in the Appendix. After the reshaping, $z^{(1)}$ and $z^{(2)}$ become $Z^{(1)}$ and $Z^{(2)}$ respectively. Here, $Z^{(1)}$ is an $n_1 \times n_2$ matrix, and the entry $Z^{(1)}_{i,j}$ is supported on qubit $v_1^i \otimes v_2^j$. Moreover, the $j$th column $Z^{(1)}_{j}$ is supported on qubits $\{v_1^1\otimes v_2^j, v_1^2\otimes v_2^j, \cdots, v_1^{n_1}\otimes v_2^j\}$, which we refer to as $V_1 \otimes v_2^j$. Similarly, the subset $C_1 \otimes C_2$ can be partitioned into $r_1$ rows, and $i$th row is supported on qubit subset $c_1^i \otimes C_2$ (defined similarly).

An elementary canonical logical-$Z$ operator, which is in the form $\left(\bar{x}_k \otimes y_j, 0_{r_1 r_2}\right)^T$ (where $0_{r_1 r_2}$ is the $r_1 r_2$-dimensional zero vector), is supported on the subset $V_1 \otimes v_2^j$ if $y_j=v_2^j$. In the matrix form, it is supported on the $j$th column of $Z^{(1)}$.

\begin{figure}[htbp]
   \centering
   \includegraphics[width=86 mm]{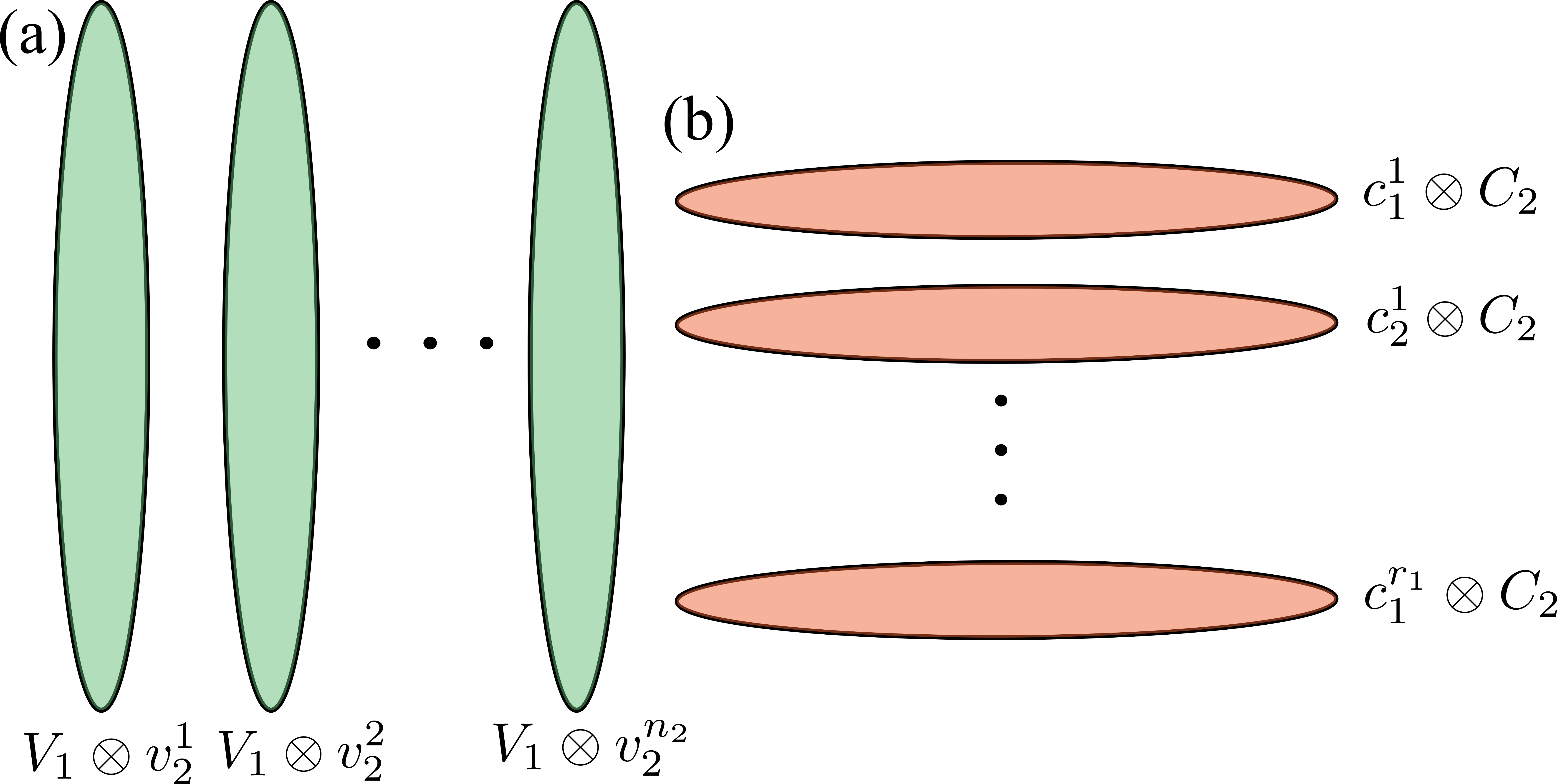}
   \caption{Graphical illustration the hypergraph product structure: (a) The qubit subset $V_1 \otimes V_2$ is partitioned into $\{V_1\otimes v_2^1, V_1\otimes v_2^2, \ldots, V_1\otimes v_2^{n_2}\}$, where $\{v_2^1, v_2^2, \ldots, v^{n_2}_2\} \in V_2$. (b) The qubit subset $C_1 \otimes C_2$ is divided into $\{c^1_1\otimes C_2, c^2_1\otimes C_2, \ldots, c^{r_1}_1\otimes C_2\}$, with $\{c_1^1, c^2_1, \cdots, c^{r_1}_1\} \in C_1$. An elementary canonical logical operator $(\bar{x}_k \otimes y_j, 0_{r_1r_2})^T$ resides on the qubit subset $V_1 \otimes v^j_2$ if $y_j = v_2^j$. Similarly, the logical operator $(0_{n_1n_2}, a_\ell \otimes \bar{b}_m)^T$ is localized on the subset $c^i_1 \otimes C_2$ if $c^i_1=a_\ell$.}
   \label{fig:hp-qsubset-partition}
\end{figure}

It would be instructive to discuss an example of the canonical logical operators. It is well-known that Kitaev's toric code~\cite{kitaev2003fault} can be viewed as two copies of 1D repetition code~\cite{2014-Tillich}. In this context, the canonical logical operators are the string operators of minimal lengths [Fig.~\ref{fig:hp-toric}].

\begin{figure}[htbp]
   \centering
   \includegraphics[width=86 mm]{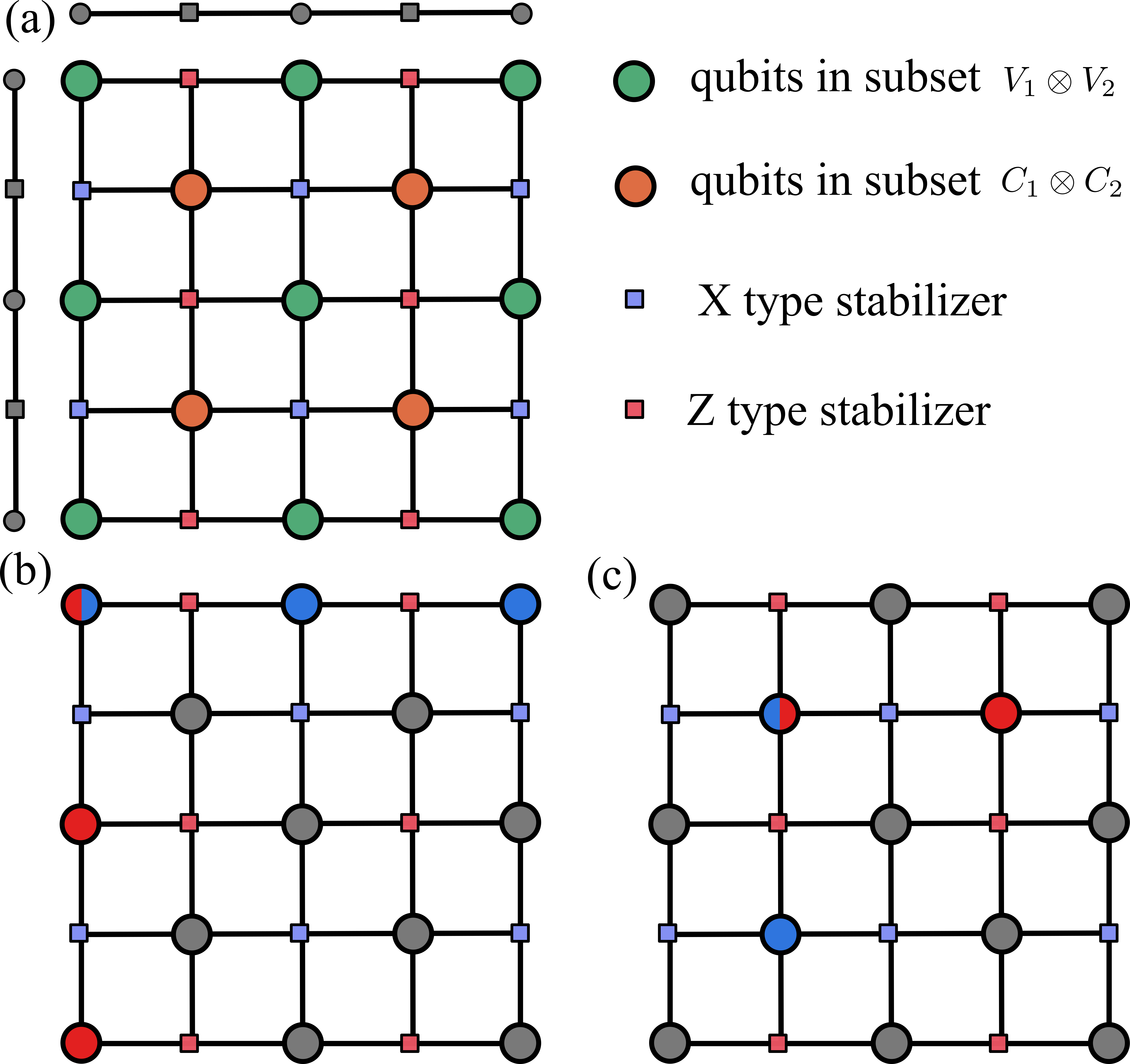}
   \caption{The hypergraph product construction of the Toric code from two repetition codes (periodic boundary conditions). (a) Delineates the two subsets of qubits, $V_1 \otimes V_2$ and $C_1 \otimes C_2$. (b) Show the logical $X$ operator (blue circles) and the logical-$Z$ operator (red circles) within the subset $V_1 \otimes V_2$. (c) Showcasing the logical $X$ (blue circles) and $Z$ (red circles) operators within the subset $C_1 \otimes C_2$.}
   \label{fig:hp-toric}
\end{figure}

{\color{blue}\emph{Energy barrier of hypergraph product code.}}- We now study relationship between the energy barrier of hypergraph product codes and their underlying classical codes. We first upper bound the energy barrier of the hypergraph product code in terms of the energy barrier of the classical codes. We then prove a matching lower bound, establishing their equivalence.

{\color{blue}\emph{(1) Upper bound}.}- The upper bound on the energy barrier can be obtained by considering a specific path from the identity to some logical operator. By choosing this logical operator to be a canonical logical operator, we obtain an upper bound. First, consider an elementary canonical logical operator of the form of $(\bar{x}_k\otimes y_j, 0_{r_1r_2})^T$, where $0_{r_1r_2}$ is the $r_1r_2$-dimensional zero vector. We will consider a path consisting of the operators of the form of $(x_i\otimes y_j)^T$ from $i=1$ to $i=N$, interpolating between $x_1=(0,\ldots, 0)^T$ to $x_N= \overline{x}_k$. The energy at the $i$th step is precisely $\mathrm{wt}(H_1 x_i)$, which is the energy of the classical code $H_1$ evaluated with respect to $x_i$. Because $x_N= \overline{x}_k$ is a codeword of $H_1$, the energy barrier along this path is at most $\Delta(H_1)$. 

Similarly, we can consider a path of the form of  $(0_{n_1n_2}, a_\ell \otimes b_i)$ from $i=1$ to $i=N$, interpolating between $b_1 = (0,\ldots, 0)^T$ to $b_N = \overline{b}_m$, where $\overline{b}_m$ is a coreword of $H_2^T$. This yields an energy barrier upper bound of $\Delta(H_2^T)$. Together, we obtain an energy upper bound of $\min(\Delta(H_1), \Delta(H_2^T))$ for the canonical logical-$Z$ operators. A similar argument can be applied to the canonical logical $X$ operators, yielding an upper bound of $\min(\Delta(H_2), \Delta(H_1^T))$.

{\color{blue}\emph{(2) Lower bound}.}- We now prove a matching lower bound for the energy barrier of the canonical logical-$Z$ operators.
\begin{proposition}\label{proposition:canonical_z_energy_barrier}
    For any nontrivial canonical logical-$Z$ operator $L$,
    \begin{equation}
        \Delta(L) \geq \min(\Delta(H_1), \Delta(H_2^T)).
    \end{equation}
\end{proposition}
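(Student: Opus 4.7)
The plan is to convert any quantum path from $I$ to $L$ into a classical path in one of the parent codes and then invoke the classical energy barrier. Fix $r = (P_0, \ldots, P_t) \in w(I, L)$ and reshape the $Z$-part of $P_i$ into matrices $Z_i^{(1)} \in \mathbb{F}_2^{n_1 \times n_2}$ and $Z_i^{(2)} \in \mathbb{F}_2^{r_1 \times r_2}$ as in the excerpt. The $X$-stabilizer syndrome then reads $S_i := H_1 Z_i^{(1)} + Z_i^{(2)} H_2$, and $\epsilon(P_i) \geq \text{wt}(S_i)$, so it suffices to exhibit an intermediate step $i^*$ where $\text{wt}(S_{i^*})$ is at least $\Delta(H_1)$ or $\Delta(H_2^T)$.

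Since $L$ is nontrivial, either (A) some $\lambda_{k^* j^*} = 1$, or (B) all $\lambda_{kj}$ vanish and some $\kappa_{\ell^* m^*} = 1$. In Case (A), I would first construct a dual vector $w \in \ker H_2$ with $w^T y_{j^*} = 1$ and $w^T y_j = 0$ for $j \neq j^*$. Such a $w$ exists because the $y_j$'s descend to a basis of $\mathbb{F}_2^{n_2}/\text{Im}(H_2^T)$, and the evaluation map $w \mapsto (w^T y_j)_{j=1}^{k_2}$ from $\ker H_2 = \text{Im}(H_2^T)^\perp$ to $\mathbb{F}_2^{k_2}$ is an isomorphism by dimension count. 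Setting $f_i := Z_i^{(1)} w$ and using $H_2 w = 0$ gives $S_i w = H_1 f_i$, collapsing the full syndrome to a classical $H_1$-syndrome. The sequence $(f_i)$ changes by at most one bit per step, $f_0 = 0$, and $f_t = \sum_k \lambda_{k j^*} \overline{x}_k$ is a nonzero codeword of $H_1$ (nonzero by the linear independence of $\{\overline{x}_k\}$ together with $\lambda_{k^* j^*}=1$). The definition of the classical energy barrier then supplies a step $i^*$ with $\text{wt}(H_1 f_{i^*}) \geq \Delta(H_1)$. The projection inequality $\text{wt}(S_{i^*}) \geq \text{wt}(S_{i^*} w) = \text{wt}(H_1 f_{i^*})$ closes the case, because every nonzero entry of $S_{i^*} w$ forces at least one nonzero entry in the corresponding row of $S_{i^*}$.

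Case (B) is symmetric with $Z^{(2)}$, $H_1^T$, $a_\ell$, and $\overline{b}_m$ playing the roles of $Z^{(1)}$, $H_2$, $y_j$, and $\overline{x}_k$. Using $a_\ell \notin \text{Im}(H_1)$ in the same way, one picks $u \in \ker H_1^T$ with $u^T a_{\ell^*} = 1$ and $u^T a_\ell = 0$ for $\ell \neq \ell^*$, and sets $v_i := (u^T Z_i^{(2)})^T$. Then $u^T H_1 = 0$ gives $(u^T S_i)^T = H_2^T v_i$, the path $(v_i)$ runs from $0$ to the nonzero $H_2^T$-codeword $\sum_m \kappa_{\ell^* m} \overline{b}_m$, and the same chain of inequalities yields $\text{wt}(S_{i^*}) \geq \Delta(H_2^T)$ at some step. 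Combining (A) and (B) establishes $\Delta(L) \geq \min(\Delta(H_1), \Delta(H_2^T))$.

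The main obstacle is arranging the correct dual vector so that the projected quantum syndrome collapses to a single-code classical syndrome; this is exactly where the canonical-form conditions $y_j \notin \text{Im}(H_2^T)$ and $a_\ell \notin \text{Im}(H_1)$ are needed. Once the dual vectors are in hand, everything else is routine: the reduction of the quantum path to a classical one is essentially by construction, and the ``row-witness'' bound $\text{wt}(S) \geq \text{wt}(Sw)$ is elementary.
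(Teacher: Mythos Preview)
Your proof is correct and rests on the same mechanism as the paper's: multiply the reshaped syndrome matrix $S_i=H_1Z_i^{(1)}+Z_i^{(2)}H_2$ on the right by a codeword $w\in\ker H_2$, which kills the $Z^{(2)}H_2$ term and yields $S_iw=H_1f_i$ with $f_i=Z_i^{(1)}w$; the row-witness bound $\text{wt}(S_i)\geq\text{wt}(S_iw)$ is exactly the paper's Lemma~\ref{lemma:pauli_weight_reduction}. Where the arguments diverge is in how the codeword is chosen. The paper first reduces composite canonical operators to elementary ones (Lemma~\ref{lemma:energy_barrier_of_composite_logical}), and there has to argue by contradiction that \emph{some} codeword $L_c$ of $H_2$ makes the projected endpoint $L'$ nontrivial. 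You instead pick $w$ up front as the dual vector to $y_{j^*}$ under the pairing $\ker H_2\times(\mathbb{F}_2^{n_2}/\text{Im}\,H_2^T)\to\mathbb{F}_2$, so that $f_t=\sum_k\lambda_{kj^*}\bar{x}_k$ is visibly nonzero; this collapses the paper's two lemmas and the existence argument into a single linear-algebraic step. The paper's route has the advantage of producing an explicit deformed path supported on a single column (useful for the matching upper bound), while yours is shorter and makes the role of the canonical-form hypotheses $y_j\notin\text{Im}(H_2^T)$, $a_\ell\notin\text{Im}(H_1)$ completely transparent.
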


We introduce two lemmas to prove this proposition. We use the following convention in the proof. A path $r=\left\{P_0, P_1, \ldots, P_F\right\}$ is said to be supported on  $U\subseteq V$ if the support of all $P_i$ in $r$ is included in $U$.

\begin{lemma}
\label{lemma:path_within_subset}
    For any elementary canonical logical-$Z$ operator $L$ supported on $V_1 \otimes v^\alpha_2$ (resp. $c_1^\beta \otimes C_2$), its energy barrier is attained by a path supported on $V_1 \otimes v^\alpha_2$ (resp. $c^\beta_2 \otimes C_2$).
\end{lemma}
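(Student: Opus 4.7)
The plan is to show $\Delta(L)=\Delta^{\mathrm{col}}(L)$, where $\Delta^{\mathrm{col}}(L)$ is the energy barrier computed over paths supported on $V_1\otimes v_2^\alpha$. The inequality $\Delta(L)\leq\Delta^{\mathrm{col}}(L)$ is immediate, so it suffices to exhibit, for any path $r=(P_0,\ldots,P_F)$ from $I$ to $L$, a column-supported path $r^{\mathrm{col}}$ from $I$ to $L$ satisfying $\epsilon_{\max}(r^{\mathrm{col}})\leq\epsilon_{\max}(r)$. Applying the construction to an optimal $r$ then yields the lemma. The same scheme, swapping the roles of the two tensor factors, will handle the $c_1^\beta\otimes C_2$ case.

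The construction is by projection. For each $i$, let $Q_i$ be the restriction of $P_i$ to the qubit subset $V_1\otimes v_2^\alpha$, obtained by dropping every Pauli factor on qubits outside the column. Since $P_0=I$ and $P_F=L$ is already supported on the column, $Q_0=I$ and $Q_F=L$. Each elementary step of $r$ either acts inside the column (in which case $Q_i$ and $Q_{i+1}$ differ on exactly one column qubit) or outside it (in which case $Q_i=Q_{i+1}$); collapsing consecutive duplicates yields a valid column-supported path. Writing $Z^{(1)}, Z^{(2)}$ for the matrix forms of the two components of $P_i$, $\epsilon(Q_i)$ equals the weight of the $\alpha$-th column of $H_1 Z^{(1)}$, while the $\alpha$-th column of the full syndrome matrix $H_1 Z^{(1)}+Z^{(2)}H_2$ reads $H_1(Z^{(1)})_{:,\alpha}+Z^{(2)}(H_2)_{:,\alpha}$. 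Hence only the $C_1\otimes C_2$ component of $P_i$ can interfere with the column-$\alpha$ checks, and only through the column $(H_2)_{:,\alpha}$ of $H_2$.

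The main obstacle is that the index-by-index inequality $\epsilon(Q_i)\leq\epsilon(P_i)$ fails in general — most strikingly when $Z^{(2)}$ exactly cancels the column-$\alpha$ contribution of $Z^{(1)}$, as happens whenever $P_i$ is a $Z$-stabilizer. What I need instead is the weaker max-over-path bound $\max_i\epsilon(Q_i)\leq\max_j\epsilon(P_j)$. My planned route is a dispersion argument: a $Z$-Pauli on any qubit $(c_1,c_2)\in C_1\otimes C_2$ with $(H_2)_{c_2,\alpha}=1$ simultaneously contributes to \emph{every} column $\beta\in\mathrm{supp}((H_2)_{c_2,:})$ of the syndrome matrix, so an $\alpha$-column cancellation either disperses syndrome mass into other columns (raising $\epsilon(P_i)$ there) or must be matched by additional $Z^{(1)}$- or $Z^{(2)}$-mass that itself has to be created and later removed along $r$. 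Combined with the boundary conditions $Z^{(2)}=0$ at the endpoints $P_0$ and $P_F$ and the single-qubit-per-step restriction, a pigeonhole-style argument should identify, for each column state appearing in $r^{\mathrm{col}}$, an index $j$ in $r$ at which no such cancellation is active and $\epsilon(P_j)$ matches the relevant column energy. Quantitatively formalising this dispersion/pigeonhole argument is the principal technical hurdle.
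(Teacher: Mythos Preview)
Your projection map---restriction to the single column $V_1\otimes v_2^\alpha$---is the natural first guess, but it is not the one that makes the argument go through, and the ``dispersion/pigeonhole'' patch you outline does not obviously close the gap. The difficulty is real: take any $Z$-stabilizer $s$ whose $V_1\otimes V_2$ component has column~$\alpha$ equal to some $u$ with $\mathrm{wt}(H_1u)$ large, and consider a path $I\to s\to sL\to L$ built from the low-energy stabilizer walk of Lemma~\ref{lemma:energy_barrier_stabilizer_only} followed by an optimal walk for $L$. This path has $\epsilon_{\max}\le\max(w_cw_q,\Delta(L))$, yet its column-$\alpha$ projection visits a state of energy $\mathrm{wt}(H_1u)$, which can exceed both quantities. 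So the max-over-path inequality you want can fail even for \emph{optimal} paths, and the dispersion heuristic (syndrome leaking into neighbouring columns) is neutralised precisely because stabilizers cancel that leakage everywhere simultaneously. There is no obvious pigeonhole to exploit.

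The missing idea is to change the projection. Rather than restricting to column~$\alpha$ alone, pick a nontrivial codeword $L_c\in\ker H_2$ with $(L_c)_\alpha=1$ (such a codeword exists because $v_2^\alpha\notin\mathrm{Im}(H_2^T)$), and map each $P_i$ to the single-column vector
\[
z^{1,s}\ :=\ \sum_{j:(L_c)_j=1}(Z^{(1)})_{:,j}.
\]
Because $H_2L_c=0$, every row of $Z^{(2)}H_2$ has even overlap with the support of $L_c$, so the $Z^{(2)}$ contribution to those columns sums to zero. This yields the \emph{pointwise} inequality
\[
\mathrm{wt}(H_1 z^{1,s})\ \le\ \mathrm{wt}\!\left(H_1Z^{(1)}+Z^{(2)}H_2\right)=\epsilon(P_i),
\]
with no dispersion bookkeeping required. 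The resulting sequence is a valid path (single-qubit steps are preserved or collapsed), it terminates at $L$ since $L$ is supported on column~$\alpha$ alone, and the index-by-index bound gives $\epsilon_{\max}(r')\le\epsilon_{\max}(r)$ immediately. In short: project onto the \emph{codeword-weighted} column sum, not the bare column; that is what kills the $Z^{(2)}$ interference cleanly.
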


\begin{lemma}
\label{lemma:energy_barrier_of_composite_logical}
    For any nontrivial canonical logical-$Z$ operator $L$, the energy barrier $\Delta(L)$ is greater than or equal to the minimum energy barrier of the elementary canonical logical-$Z$ operators.
\end{lemma}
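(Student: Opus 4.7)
The plan is to reduce the claim to the elementary case by restricting an optimal path to a single column or row of the hypergraph product. Since any canonical logical-$Z$ operator $L$ decomposes as a sum of elementary components on pairwise disjoint single-column/single-row supports, and the restriction of $L$ to any column $\alpha$ in its column support is a nontrivial codeword of $H_1$ (and similarly for rows and $H_2^T$), one hopes to extract an elementary-like barrier from an arbitrary path.

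Given a path $r = (P_0, \ldots, P_F)$ from $I$ to $L$, I would form, for each column $\alpha$ in the column support of $L$, the restriction $r_\alpha$ obtained by retaining only the qubits in $V_1 \otimes v_2^\alpha$ at each step of $r$. Consecutive Paulis in $r_\alpha$ differ by at most one qubit, so $r_\alpha$ is a valid path from $I$ to $L|_\alpha = \sum_k \lambda_{k\alpha} \bar{x}_k \otimes v_2^\alpha$ supported entirely within a single column. Following the single-column reasoning of Lemma~\ref{lemma:path_within_subset}, the energy of $r_\alpha$ at step $i$ reduces to the classical syndrome weight $\text{wt}(H_1 Z^{(1)}_{\cdot, \alpha}(i))$, so $\epsilon_{\max}(r_\alpha) \geq \Delta(L|_\alpha) \geq \Delta(H_1)$, which is at least $\min_{L' \text{ elementary}} \Delta(L')$. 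Analogous row restrictions $r_\beta$ yield $\epsilon_{\max}(r_\beta) \geq \Delta(H_2^T)$. If one had the pointwise bound $\epsilon(P_i) \geq \epsilon(r_\alpha(i))$, the claim would follow immediately.

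The hard part is that this pointwise comparison can fail: the quantum syndrome in check column $\alpha$ is $\text{wt}(H_1 Z^{(1)}_{\cdot, \alpha}(i) + Z^{(2)}(i)(H_2)_{\cdot, \alpha})$, which can be strictly less than $\text{wt}(H_1 Z^{(1)}_{\cdot, \alpha}(i))$ because activity on the $C_1 \otimes C_2$ qubits can cancel part of the column syndrome. I would close this gap with a two-sided case analysis. Either the cancellation is mild at the peak step of the column restriction, so $\epsilon(P) \geq \Delta(H_1)$ directly; or the cancellation demands substantial $C_1 \otimes C_2$ activity, which then forces some row-restricted path $r_\beta$ to traverse a nontrivial codeword of $H_2^T$ at some step, yielding $\epsilon(P) \geq \Delta(H_2^T)$ there by the symmetric analysis. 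In either branch, $\epsilon_{\max}(r) \geq \min(\Delta(H_1), \Delta(H_2^T)) = \min_{L' \text{ elementary}} \Delta(L')$. The main obstacle is making this quantitative: the possibility of simultaneous partial cancellation in both column and row directions means one must identify a single path step at which the quantum energy is forced to exceed the minimum of the two classical barriers, likely by carefully summing contributions across all columns and rows and exploiting the zero-syndrome boundary conditions at both endpoints of the path.
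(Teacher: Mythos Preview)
Your proposal correctly identifies the obstacle but does not resolve it. Restricting to a single column $\alpha$ gives the classical syndrome $\mathrm{wt}(H_1 Z^{(1)}_{\cdot,\alpha})$, which, as you note, can exceed the quantum energy because the $Z^{(2)}H_2$ contribution in that column may cancel part of it. Your two-branch case analysis (``mild cancellation'' versus ``substantial $C_1\otimes C_2$ activity forcing a row barrier'') is never made quantitative, and there is no evident mechanism forcing it to work: cancellation can be distributed across many columns and along the path so that at no single step does the quantum energy, viewed through one fixed column or one fixed row, exceed either classical barrier.

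The missing idea --- which is also what actually drives the proof of Lemma~\ref{lemma:path_within_subset}, not mere restriction --- is to replace ``restrict to column $\alpha$'' by ``sum the columns of $Z^{(1)}$ over the support $\mathcal{C}(L_c)$ of a nontrivial codeword $L_c$ of $H_2$.'' Call this column-sum $z^{1,s}$. Each row of $Z^{(2)}H_2$ is a linear combination of rows of $H_2$ and hence has even overlap with $L_c$; summing over $\mathcal{C}(L_c)$ therefore kills the $Z^{(2)}H_2$ contribution row by row, yielding the clean pointwise bound
\[
\mathrm{wt}(H_1 z^{1,s}) \;\leq\; \mathrm{wt}\bigl(H_1 Z^{(1)} + Z^{(2)}H_2\bigr) \;=\; \epsilon(P).
\]
This produces a deformed path $r'$ supported on a single column with $\epsilon_{\max}(r') \leq \epsilon_{\max}(r)$, and no case analysis is needed. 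The remaining step is to choose $L_c$ so that the deformed endpoint $L' = \bigl(\sum_k c_k \bar{x}_k\bigr)\otimes v_2^{\alpha}$, with $c_k = \sum_{j\in\mathcal{C}(L_c)}\lambda_{kj}$, is nontrivial; this holds because any nonzero $u_k = \sum_j \lambda_{kj} y_j$ lies outside $\mathrm{Im}(H_2^T)$ and so cannot be orthogonal to every codeword of $H_2$. Your single-column restriction is the degenerate case where $L_c$ is a unit vector, which is generically \emph{not} a codeword of $H_2$ --- and that is precisely why the cancellation survives in your version.
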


Consider an elementary canonical logical-$Z$ operator $L$ supported on $V_1 \otimes v_2^\alpha$. Suppose $\Delta(L)$ is given by a path $r$. The main idea behind the proof of Lemma~\ref{lemma:path_within_subset} is to deform a general path $r$ into a new path $r'$, supported on  $V_1 \otimes v_2^\alpha$. Importantly, we show that the energy barrier of $r'$ is no greater than that of the original path $r$. A similar argument can be applied to prove Lemma~\ref{lemma:energy_barrier_of_composite_logical}; see the Appendix for the proofs.

Let $L =(\bar{x}_k\otimes v_2^j, 0_{r_1r_2})^T$ be an elementary canonical logical-$Z$ operator, supported on $V_1 \otimes v_2^j$. Lemma~\ref{lemma:path_within_subset} implies there is a path $r$ supported on $V_1 \otimes v_2^j$ that attains the energy barrier of $\Delta(L)$. 

Using such a path, we can prove a lower bound on the energy barrier of $L$. Without loss of generality, let $r = (P_0, P_1, \cdots, P_F)$ with $P_0 = I$ and $P_F = L$. Because this path is supported on $V_1 \otimes v_2^j$, in the binary representation, each $P_i$ becomes $u_i\otimes v_2^j$ for some $u_i \in V_1$. In particular, at the $i$'th step, the energy is $\text{wt}(H_1 u_i)$, which is exactly the energy of the classical code with respect to $u_i\in V_1$. Because $u_1=(0,\ldots, 0)$ and $u_F = \bar{x}_k$ for some codeword $\bar{x}_k$ of $H_1$, the path $(u_0, \ldots, u_F)$ must have an energy barrier of at least $\Delta(H_1)$. Similarly, for any elementary canonical logical-$Z$ operator $L$ on subset $c_1^\beta \otimes C_2$, one can show that $\Delta\left(L\right) \geq \Delta\left(H_2^T\right)$. Then Lemma~\ref{lemma:energy_barrier_of_composite_logical} imeidately implies Proposition~\ref{proposition:canonical_z_energy_barrier}.

{\color{blue}\emph{(3) Energy barrier}.}- Since the lower bound and the upper bound match, the energy barrier of the canonical logical-$Z$ operators is precisely $\min (\Delta(H_1), \Delta(H_2^T))$. Similarly, the energy barrier of the canonical logical-$X$ operators can be shown to be $\min (\Delta(H_2), \Delta(H_1^T))$. Moreover, due to Theorem~\ref{thm:energy_barrier_bound_stabilizer}, the energy barrier of any logical operator can be bounded in terms of the canonical logical operators' energy barrier and the code's sparsity parameter. Note that this conclusion applies to the hypergraph product of \emph{any} two classical codes.

If the code is LDPC and the energy barrier is $\Omega(1)$, we conclude that the energy barrier of the quantum code is exactly equal to $\min(\Delta(H_1), \Delta(H_2), \Delta(H_1^T), \Delta(H_2^T))$. 

\begin{thm}\label{thm:hypergraph_ldpc_energy_barrier}
    Let $\Delta(H)$ be the energy barrier of the hypergraph product code obtained from parity check matrices $H_1$ and $H_2$ of $O(1)$ row and column weight. If $\Delta(H_1), \Delta(H_2), \Delta(H_1^T), \Delta(H_2^T) = \Omega(1)$,
    \begin{equation}
        \Delta(H)= \min (\Delta(H_1), \Delta(H_2), \Delta(H_1^T), \Delta(H_2^T)).
    \end{equation}
\end{thm}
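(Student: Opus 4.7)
The plan is to combine the canonical-operator analysis with Theorem~\ref{thm:energy_barrier_bound_stabilizer}, bridging to general logicals via the CSS structure. The upper bound $\Delta(H) \le \min(\Delta(H_1), \Delta(H_2), \Delta(H_1^T), \Delta(H_2^T))$ is immediate: the canonical logical-$Z$ and canonical logical-$X$ operators are nontrivial logicals, and the preceding analysis (the upper-bound discussion together with Proposition~\ref{proposition:canonical_z_energy_barrier} and its $X$-type analog) pins down their energy barriers as $\min(\Delta(H_1), \Delta(H_2^T))$ and $\min(\Delta(H_2), \Delta(H_1^T))$, respectively. Taking the minimum over these canonical representatives yields the claimed upper bound.

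For the matching lower bound, I would take an arbitrary nontrivial logical $L$ and decompose it, as a CSS-code Pauli, into its pure-$X$ and pure-$Z$ parts, $L = L^X L^Z$. Nontriviality of $L$ forces at least one of $L^X, L^Z$ to be a nontrivial logical (and not merely a stabilizer); without loss of generality assume $L^Z$ is the nontrivial one, since the other case is symmetric. The key structural fact is CSS energy-additivity, $\epsilon(P) = \epsilon(P^X) + \epsilon(P^Z)$. Projecting any path $r = (P_0, \ldots, P_F)$ from $I$ to $L$ onto its $Z$-component gives a valid path from $I$ to $L^Z$ (consecutive $P_i^Z$ still differ on at most one qubit) whose maximum energy is bounded above by $\epsilon_{\max}(r)$, so minimizing over $r$ yields $\Delta(L) \ge \Delta(L^Z)$. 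Writing $L^Z = \tilde L \cdot s$ for some canonical representative $\tilde L$ and $Z$-stabilizer $s$, Proposition~\ref{proposition:canonical_z_energy_barrier} gives $\Delta(\tilde L) \ge \min(\Delta(H_1), \Delta(H_2^T)) = \Omega(1)$. Because this exceeds $w_c w_q = O(1)$ for large enough codes, the remark following Theorem~\ref{thm:energy_barrier_bound_stabilizer} forces $\Delta(L^Z) = \Delta(\tilde L)$, and therefore $\Delta(L) \ge \min(\Delta(H_1), \Delta(H_2^T))$. Invoking the symmetric argument when the nontrivial part is $L^X$ and taking the minimum over all nontrivial $L$ produces the matching lower bound.

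The main subtlety lies in the mixed case where both $L^X$ and $L^Z$ are nontrivial logicals: Proposition~\ref{proposition:canonical_z_energy_barrier} speaks only about pure $Z$-type operators, so one must verify that projection of a one-qubit-change path remains a valid one-qubit-change path and that CSS additivity propagates the per-step energy bound. Once these two structural facts are in hand, Proposition~\ref{proposition:canonical_z_energy_barrier} (with its $X$-type analog) and Theorem~\ref{thm:energy_barrier_bound_stabilizer} slot together to yield $\Delta(H) = \min(\Delta(H_1), \Delta(H_2), \Delta(H_1^T), \Delta(H_2^T))$. The only other bookkeeping point is the asymptotic comparison $\Omega(1) \ge w_c w_q = O(1)$, which is what lets the $O(1)$ slack in Theorem~\ref{thm:energy_barrier_bound_stabilizer} be absorbed into exact equality rather than equality up to an additive constant.
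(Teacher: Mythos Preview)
Your proof is correct and follows essentially the same route as the paper: pin down the canonical $X$- and $Z$-logical energy barriers exactly via the upper-bound construction together with Proposition~\ref{proposition:canonical_z_energy_barrier} (and its $X$-analog), then invoke Theorem~\ref{thm:energy_barrier_bound_stabilizer} and its corollary to pass to arbitrary logicals once the classical barriers exceed $w_cw_q$. The one place you go beyond the paper's write-up is the explicit CSS projection argument ($\epsilon(P)=\epsilon(P^X)+\epsilon(P^Z)$, hence $\Delta(L)\ge\Delta(L^Z)$) to reduce a mixed-type logical to a pure one; the paper compresses this step into the sentence ``due to Theorem~\ref{thm:energy_barrier_bound_stabilizer}, the energy barrier of any logical operator can be bounded in terms of the canonical logical operators' energy barrier,'' which strictly speaking only handles stabilizer-equivalent operators, so your projection argument is a welcome clarification rather than a departure.
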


More precisely, for any $\left(w_c, w_q\right)$ quantum code derived from a hypergraph product, if the parent classical codes' energy barriers exceed $w_c w_q$, the quantum code's energy barrier matches the minimum energy barrier of these parent codes. This can be used to prove tight bounds on the energy barrier. For instance, a classical code whose Tanner graph is a bipartite expander graph has an extensive macroscopic energy barrier. Thus our result implies that the quantum expander code~\cite{leverrier2015quantum,fawzi2020constant} also has an extensive energy barrier. While this is already known~\cite{fawzi2020constant}, note that this argument does not rely on the expansion property of quantum expander code.

{\color{blue}\emph{Outlook}.}- We proved a tight bound on the energy barrier of the hypergraph product code, determined in terms of the energy barriers of the underlying classical codes. While it was expected that the energy barrier of the quantum code is related to its counterpart~\cite{rakovszky2024physicsgoodldpccodes}, we provided a first rigorous proof of this statement [Eq.~\eqref{eq:main_result}] to our knowledge. Looking forward, it would be interesting to study the energy barrier of the codes such as the homological product~\cite{Bravyi.2013b}, balanced product codes \cite{Breuckmann_2021} and lifted product codes \cite{panteleev2021quantum}. The generalized bicycle code \cite{kovalev2013,panteleev2021quantum}, due to its simple structure, is another natural candidate to explore. Understanding how our proof technique can be generalized to these setups and how to design efficient decoders that can leverage the energy barrier is left for future work.

\emph{Acknowledgement}- We thank Niko Breuckmann, Sergey Bravyi, Earl Campbell, Jeongwan Haah, Vedika Khemani, and Anthony Leverrier for helpful discussions. GZ acknowledges the financial support from Sydney Quantum Academy. This work was supported by the Australian Research Council Centre of Excellence for Engineered Quantum Systems (EQUS, CE170100009). IK acknowledges support from NSF Grant QCIS-FF 2013562.

\bibliographystyle{apsrev4-1-etal-title}
\bibliography{ref} 

\section{Appendix}

Throughout the appendix, we are always working in the field $\mathbb{F}_2$. Thus, all addition operations are modulo $2$ except for the computation of the weight of vectors or matrices, i.e., the function $\mathrm{wt}(\cdot)$.

\subsection{Vector reshaping}

Consider a basis $\mathcal{B}$ of the vector space $\mathbb{F}_2^{n_1} \otimes \mathbb{F}_2^{n_2}$:
\begin{eqnarray}
\mathcal{B}=\left\{a_i \otimes b_j \mid i=1, \ldots, n_1 \text { and } j=1, \ldots, n_2\right\}.
\end{eqnarray}
Then any vector $v \in \mathbb{F}_2^{n_1} \otimes \mathbb{F}_2^{n_2}$ can be written as
\begin{eqnarray}
    v=\sum_{a_i \otimes b_j \in \mathcal{B}} v_{i j}\left(a_i \otimes b_j\right)
\end{eqnarray}
for some $v_{i j} \in \mathbb{F}_2$. We call the $n_1 \times n_2$ matrix $V$ with entries $v_{i j}$ the \emph{reshaping} of the vector $v$. By this definition, if $A, B$ are respectively $m_1 \times n_1$ and $m_2 \times n_2$ matrices, then 
\begin{eqnarray}
    (A \otimes B)v \Rightarrow A V B^T.
\end{eqnarray}

Define $\mathrm{wt}(M)$ as the number of ones in the vector (or matrix) $M$. We have $\mathrm{wt}((A \otimes B)v) = \mathrm{wt}(A V B^T)$.

\subsection{Proof of Lemma~\ref{lemma:path_within_subset}}

Recall that a $Z$-type Pauli error of the hypergraph product code can be expressed as a bit-string $z=\left(z^{(1)}, z^{(2)}\right)^T$. The corresponding energy $\epsilon(z)= \text{wt}(H_Xz)$ is
\begin{equation}\label{eq:energy_expression_hgp}
    \epsilon(z) = \text{wt}\left((H_1\otimes I_{n_2})z^{(1)} + (I_{r_1}\otimes H_2^T) z^{(2)}\right).
\end{equation}
By applying vector reshaping, the energy can be written as
\begin{equation}\label{eq:energy_expression_hgp_reshape}
    \epsilon(z) = \text{wt}\left(H_1Z^{(1)} + Z^{(2)} H_2 \right),
\end{equation}
where $Z^{(1)}$ and $Z^{(2)}$ are the matrices reshaped from $z^{(1)}$ and $z^{(2)}$, respectively. The $j$th column of $Z^{(1)}$ is supported on qubit subset $V_1 \otimes v_2^j$.

Using Eq.~\eqref{eq:energy_expression_hgp_reshape}, we aim to prove a lower bound on the energy $\epsilon(z)$ [Lemma~\ref{lemma:pauli_weight_reduction}]. To that end, we shall use the following convention. Let $L_c$ be a nontrivial codeword of $H_2$. The set of columns of $Z^{(1)}$ associated with the nonzero entries of $L_c$ will play an important role. We define the index set of such columns as $\mathcal{C}(L_c)$:
\begin{equation}
    \mathcal{C}(L_c):= \{j: (L_c)_j=1 \}.
\end{equation}

Given a codeword $L_c$ of $H_2$, one can construct a vector $z^{1, s}$ from $Z^{(1)}$ by summing all the columns in the set $\mathcal{C}(L_c)$. More formally,
\begin{equation}
    z^{1,s}_i = \sum_{ j : j \in \mathcal{C}(L_c) } Z^{(1)}_{ij},
    \label{eq:prescription_weight_reduction}
\end{equation}
where the addition is modulo $2$. For example, if $L_c=110100$, we would sum columns $1, 2$, and $4$. Using Eq.~\eqref{eq:prescription_weight_reduction}, we can deform an arbitrary path to a path consisting of Paulis only supported on subset $V_1\otimes v_2^k$ for some $k$.\footnote{The precise choice of $k$ does not matter; any $k\in \mathcal{C}(L_c)$ would suffice.} In particular, we can prove an inequality between the energy of the original Pauli and the deformed Pauli, proved in Lemma~\ref{lemma:pauli_weight_reduction}. 
\begin{lemma}
\label{lemma:pauli_weight_reduction}
    \begin{equation}
        \mathrm{wt}(H_1z^{1,s}) \leqslant \mathrm{wt}\left(H_1Z^{(1)} + Z^{(2)} H_2\right).
    \end{equation}
\end{lemma}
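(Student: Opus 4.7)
The plan is to recognize that the prescription in Eq.~\eqref{eq:prescription_weight_reduction} is exactly matrix-vector multiplication: summing the columns of $Z^{(1)}$ indexed by the support of $L_c$ is the same as computing $Z^{(1)} L_c$, where $L_c$ is viewed as a column vector. Thus I would first rewrite
\begin{equation}
    z^{1,s} = Z^{(1)} L_c,
\end{equation}
so that $H_1 z^{1,s} = H_1 Z^{(1)} L_c$. The next step is to exploit the fact that $L_c$ is a codeword of $H_2$, i.e.\ $H_2 L_c = 0$. This lets me add the ``phantom'' term $Z^{(2)} H_2 L_c = 0$ on the right and conclude
\begin{equation}
    H_1 z^{1,s} = \bigl(H_1 Z^{(1)} + Z^{(2)} H_2\bigr)\, L_c.
\end{equation}

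The remaining work is a purely combinatorial weight estimate. Writing $M = H_1 Z^{(1)} + Z^{(2)} H_2$, I would argue that for any $0/1$ vector $v$, $\mathrm{wt}(Mv) \leq \mathrm{wt}(M)$. Indeed, the $i$th entry of $Mv$ can be nonzero only if some entry $M_{ij}$ with $j \in \mathcal{C}(v)$ is nonzero, so the number of nonzero rows of $Mv$ is bounded by the number of nonzero entries lying in the columns of $M$ indexed by $\mathcal{C}(v)$, which in turn is bounded by $\mathrm{wt}(M)$. Applying this with $v = L_c$ delivers
\begin{equation}
    \mathrm{wt}(H_1 z^{1,s}) \;\leq\; \mathrm{wt}\bigl(H_1 Z^{(1)} + Z^{(2)} H_2\bigr),
\end{equation}
which is the claim.

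There is no genuinely hard step here; the substantive content is conceptual rather than technical. The main thing to get right is the identification of the column-sum prescription with right-multiplication by $L_c$, and the recognition that the codeword condition $H_2 L_c = 0$ is precisely what makes the $Z^{(2)}$ contribution disappear. Once those two observations are in place, the weight inequality $\mathrm{wt}(Mv)\leq \mathrm{wt}(M)$ is elementary. The real use of this lemma is downstream, when it is combined with the reshaping identity in Eq.~\eqref{eq:energy_expression_hgp_reshape} to deform an arbitrary path into one supported on a single column subset $V_1 \otimes v_2^k$ without raising the energy barrier; that deformation, rather than the weight inequality itself, is where the geometric content lives.
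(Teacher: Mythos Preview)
Your proof is correct and rests on the same key observation as the paper's: the codeword condition $H_2 L_c = 0$ kills the $Z^{(2)}$ contribution. Your packaging is cleaner---you recognize $z^{1,s} = Z^{(1)} L_c$ and hence $H_1 z^{1,s} = M L_c$ directly, then invoke the elementary bound $\mathrm{wt}(Mv)\le \mathrm{wt}(M)$---whereas the paper argues by contradiction row-by-row, showing that if $(H_1 z^{1,s})_j = 1$ then row $j$ of $M$ has odd (hence nonzero) weight on the columns $\mathcal{C}(L_c)$; the parity argument there is exactly your observation $(Z^{(2)}H_2)_{\text{row }j}\cdot L_c = 0$ unwound. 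So the two proofs are the same argument in different clothing, with yours being the more economical formulation.
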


\begin{proof}
    We prove this by contradiction. Consider the row spaces of $H_1z^{1,s}$ and $H_1Z^{(1)} + Z^{(2)} H_2$. If $\mathrm{wt}(H_1z^{1,s}) > \mathrm{wt}\left(H_1Z^{(1)} + Z^{(2)} H_2\right)$, then there exists a row $j$ such that
    \begin{eqnarray}
    \label{eq:weight_checks_rowspace}
        \mathrm{wt}((H_1z^{1,s})_{\text{row } j}) > \mathrm{wt}\left( (H_1Z^{(1)} + Z^{(2)} H_2)_{\text{row } j}\right).
    \end{eqnarray}
    We will prove that Eq.~\eqref{eq:weight_checks_rowspace} cannot be satisfied, thereby proving the claim.

    Without loss of generality, consider the $j$'th row. Since $H_1z^{1,s}$ is a $r_1 \times 1$ vector, $\mathrm{wt}((H_1z^{1,s})_{\text{row } j})$ must be either 0 or 1. If $\mathrm{wt}((H_1z^{1,s})_{\text{row } j}) = 0$, Eq.~(\ref{eq:weight_checks_rowspace}) cannot be satisfied. Therefore, we consider the $\mathrm{wt}((H_1z^{1,s})_{\text{row } j}) = 1$ case.
    
    If $\mathrm{wt}((H_1z^{1,s})_{\text{row } j}) = 1$,  $(H_1Z^{(1)})_{\text{row } j}$ must contain an odd number of ones on the columns in the set $\mathcal{C}(L_c)$. Otherwise, we would have had $\mathrm{wt}((H_1z^{1,s})_{\text{row } j}) = 0$, which is a contradiction. On the other hand, we remark that $(Z^{(2)} H_2)_{\text{row } j}$ consists of an even number of ones on the column set $\mathcal{C}(L_c)$. To see why, note that each row of $Z^{(2)}H_2$ is a linear combination of the checks in $H_2$. Because $L_c$ is a codeword of $H_2$, $(Z^{(2)} H_2)_{\text{row}, j} L_c=0$. Therefore, in the $\mathrm{wt}((H_1z^{1,s})_{\text{row } j}) = 1$ case, the number of ones in
    $H_1Z^{(1)} + Z^{(2)} H_2$ on the $j$'th row and the columns in $\mathcal{C}(L_c)$ is odd. 

    Thus we conclude $\mathrm{wt}\left( (H_1Z^{(1)} + Z^{(2)} H_2)_{\text{row } j}\right)\geq 1$. As such, Eq.~(\ref{eq:weight_checks_rowspace}) cannot be satisfied. This completes the proof.
\end{proof}

Now we are in a position to prove Lemma~\ref{lemma:path_within_subset}. We do so by identifying a path $r'=\left\{P_0', P_1', \cdots, P_F'\right\}$ that is only supported on $V_1\otimes v_2^\alpha$ while ensuring that $\epsilon_{\max }\left(r^{\prime}\right) \leqslant \epsilon_{\max }(r)$. Lemma~\ref{lemma:pauli_weight_reduction} suggests a way to deform the path $r$ to the one supported on $V_1\otimes v^\alpha_2$. 

Without loss of generality, let $r=\left\{P_0, P_1, \cdots, P_F\right\}$ be a path that $\Delta(L) = \epsilon_{\max}(r)$, with $P_0 = I$ and $P_F = L$. We consider a Pauli $P_i$ in the path $r$. It will be convenient to work in its binary representation, written as $(p^{(1)}, p^{(2)})^T$, where $p^{(1)}$ and $p^{(2)}$ represent the Paulis supported on $V_1\otimes V_2$ and $C_1\otimes C_2$, respectively. First, we remove all the Paulis supported on $C_1\otimes C_2$ by setting $p^{(2)}$ as the zero vector. Next, we apply the following transformations to $p^{(1)}$. We reshape $p^{(1)}$ and refer to the reshaped matrix as $P^{(1)}$. Let $L_c$ be a codeword of $H_2$ such that $(L_c)_{\alpha}=1$. Consider a set of columns in $P^{(1)}$ corresponding to the index set $\mathcal{C}(L_c)$. Denoting each column as $u_k$, where $k \in \mathcal{C}(L_c)$, we update the column $u_{\alpha}$ in the following way:
\begin{equation}
    u_\alpha \to u_{\alpha} + \sum_{k\in \mathcal{C}(L_c) \setminus \{ \alpha\}} u_k.
    \label{eq:column_update}
\end{equation}
Afterwards, the other columns of $P^{(1)}$ are set to the zero vector. This yields the deformed Pauli operator $P_i'$.

By construction, the resulting $P_i^{\prime}$ is supported on $V_1 \otimes v^\alpha_2$. Note that $r'$ is a valid path because $\mathrm{wt}(P_i^{\prime} P_{i+1}^{\prime}) \leq 1$ for every $i$. Also, because $P_F' = P_F = L$, $r'$ is still a path for $L$. Moreover, because $\epsilon\left(P_i^{\prime}\right) \leqslant \epsilon\left(P_i\right)$ for all $i$ [Lemma~\ref{lemma:pauli_weight_reduction}], we have $\epsilon_{\max }\left(r^{\prime}\right) \leqslant \epsilon_{\max }(r)$. Both $r'$ and $r$ are paths for $L$, by definition  $\epsilon_{\max }\left(r^{\prime}\right) \geqslant \epsilon_{\max }(r)$, we conclude $\epsilon_{\max }\left(r^{\prime}\right) = \epsilon_{\max }(r)$.  Thus, by deforming $r$, we obtained a new path $r'$ supported on $V_1 \otimes v^\alpha_2$ that yields the energy barrier $\Delta(L)$.

This argument can be applied to prove similar lower bounds for logical operators on $c^\beta_1 \otimes C_2$. To conclude, for any elementary canonical logical operator $L$ supported on $V_1 \otimes v^\alpha_2$ (resp. $c^\beta_1 \otimes C_2$), their energy barrier can be given by a path supported on $V_1 \otimes v^\alpha_2$ (resp. $c^\beta_1 \otimes C_2$).

\subsection{Proof of Lemma~\ref{lemma:energy_barrier_of_composite_logical}}

Any nontrivial canonical logical-$Z$ operator $L$ belongs to one of the following categories:
\begin{itemize}
    \item Case 1: $L$ is supported solely on the qubit subset $V_1 \otimes V_2$.
    \item Case 2: $L$ is supported solely on the qubit subset $C_1 \otimes C_2$.
    \item Case 3: $L$ is supported on both subsets.
\end{itemize}
We will focus solely on Case 1. Case 2 can be analyzed similarly by considering subsets $C_1 \otimes C_2$, while case 3 can be treated as Case 1 or 2.

Without loss of generality, let the energy barrier of $L$ be attained by a path $r = \{P_0, P_1, \cdots, P_F\}$, with $P_0 = I$ and $P_F = L$. Similar to the approach taken in Lemma~\ref{lemma:path_within_subset}, we aim to deform the path $r$ to the one supported on $V_1\otimes v_2^k$ for some $k$, such that the energy barrier of the deformed path lower bounds that of the $r$.

The deformation works in the same way as in the proof of Lemma~\ref{lemma:path_within_subset}. We describe this procedure again for the readers' convenience.  Let $L_c$ be a nontrivial codeword of $H_2$ and $\mathcal{C}(L_c)$ be its corresponding column index set. We consider a binary representation of a Pauli $P_i$, written as $(p^{(1)}, p^{(2)})^T$. As in the proof of Lemma~\ref{lemma:path_within_subset}, we remove the Paulis supported on $C_1\otimes C_2$ by setting $p^{(2)}$ as the zero vector. Next, reshape $p^{(1)}$ into a matrix $P^{(1)}$ and update its columns in the following way. Choose $\alpha \in \mathcal{C}(L_c)$. This column is updated as Eq.~\eqref{eq:column_update}. The other columns of $P^{(1)}$ are converted to zero vectors.

Thanks to Lemma~\ref{lemma:pauli_weight_reduction}, we obtain a new path $r'=\{P_0', P_1', \cdots, P_F'\}$ supported on $V_1 \otimes v_\alpha^2$ with the property $\epsilon_{\max}\left(r'\right) \leqslant \epsilon_{\max}\left(r\right)$. Note that $P_F'$, in the binary representation, is of the form $\bar{x}\otimes v_{\alpha}^2$, where $\bar{x}$ is a codeword of $H_1$. Therefore, $P_F'$ is either a nontrivial elementary canonical logical operator or the identity. In the latter case, $P_F'$ is the trivial codeword (zero vector) in the binary representation. Henceforth, we denote this as $L' = P_F'$.

If $L'$ is nontrivial, we can use the relation between the energy barriers of $L$ and $L'$:
\begin{eqnarray}
    \Delta(L) = \epsilon_{\max}(r) \geqslant \epsilon_{\max}(r') \geqslant \Delta(L').
\end{eqnarray}
Because $L'$ is an elementary logical operator, $\Delta(L)$ is greater or equal to the minimum energy barrier of elementary canonical logical operators. Thus, if $L'$ is nontrivial, the proof follows immediately.

If $L'$ is an identity, the above argument does not work. Fortunately, it turns out that for any $L'$, one can choose $L_c$ (the codeword of $H_2$ used in the current proof) such that $L'$ is not an identity.

Without loss of generality, consider a canonical logical-$Z$ operator $L$, expressed as 
\begin{equation}
    L = \begin{pmatrix} \sum_{k,j} \lambda_{kj}\bar{x}_k\otimes y_j \\ 0_{r_1r_2}
    \end{pmatrix},
\end{equation}
where (i) $H_1 \bar{x}_i =0$ and (ii) $y_j \notin \operatorname{Im}\left(H_2^T\right)$ are unit vectors. If a given path ends with $L$, its deformation (using Eq.~\eqref{eq:column_update}) yields the following operator $L'$:
\begin{equation}
    L' = \begin{pmatrix} \sum_{k} c_k\bar{x}_k\otimes y_{\alpha} \\ 0_{r_1r_2}
    \end{pmatrix},
\end{equation}
where $\alpha\in \mathcal{C}(L_c)$ and  $c_k$ is defined as 
\begin{equation}
    c_k := \sum_{j\in \mathcal{C}(L_c)} \lambda_{kj}.
\end{equation}
Note that $L'$ is trivial if and only if $c_k=0$ for all $k$. Therefore, we aim to prove that there exists a choice of $L_c$ such that $c_k=1$ for at least one $k$.

Let us prove the contrapositive. Suppose $c_k=0$ for all $k$, for any choice of $L_c$. Consider the following vector:
\begin{equation}
    u_k := \sum_{j} \lambda_{kj} y_j.
\end{equation}
Note that $c_k = u_k^T L_c$. By our assumption $c_k=0$ for any choice of $L_c$ and so the inner product of $u_k$ with any codeword of $H_2$ must be zero. On the other hand, $u_k$, if it is nonzero, must lie outside of $\text{Im}(H_2^T)$ by the definition of the $y_j$'s. Thus, $u_k$ is not an element of the row space of $H_2$. However, this is a contradiction for the following reason. For a linear code, let $H$ and $G$ be the parity check matrix and the generator matrix. Then $v^TG=0$ if and only if $v$ is a vector in the row space of $H$. In our setup, if $c_k = 0$ for any $L_c$, then $u_k^T G_2=0$.  This implies that $u_k$ must be in the row space of $H_2$, which contradicts the fact that it lies outside of $\text{Im}(H_2^T)$. To conclude, there must be at least one $k$ such that $c_k=1$. Thus, there is always a choice of $L_c$ such that $L'$ is not an identity, thereby proving the claim.

Case 2 can be analyzed similarly to Case 1 by considering rows in the subset $C_1 \otimes C_2$. For Case 3, one can treat it just as Case 1 or Case 2. For example, when treating it as Case 1, the logical operator $L$ has a nontrivial part in the subset $V_1 \otimes V_2$. One can prove there exists a codeword $L_c$ of $H_2$, such that after the deformation, the resulting $L'$ is a nontrivial elementary canonical logical operator.

\end{document}